\newtheorem{example}{Example}
\def\calA{{\mathcal A}}
\def\N{{\sym N}}
\def\Z{{\sym Z}}
\def\kk{\mathbf{k}}
\def\Set{\mbox{\textbf{Set}}}
\def\up#1{\raise 1ex\hbox{\footnotesize#1}}
\def\pointir{\unskip . --- \ignorespaces}
\def\Set{\mbox{\textbf{Set}}}
\def\FinSet{\mbox{\textbf{FinSet}}}
\def\Mon{\mbox{\textbf{Mon}}}
\def\Grp{\mbox{\textbf{Grp}}}
\def\AAU#1{#1\mbox{\textbf{-AAU}}}
\def\Lie#1{#1\mbox{\textbf{-Lie}}}
\def\N{\mathbb{N}}
\def\Z{\mathbb{Z}}
\def\f#1{\mathfrak{#1}}
\def\botimes{\bigotimes}
\newcommand{\calB}{{\mathcal B}}
\newcommand{\calC}{{\mathcal C}}
\newcommand{\calU}{{\mathcal U}}
\newtheorem{theorem}{Theorem}
\newtheorem{lemma}{Lemma}
\newtheorem{corollary}{Corollary}
\newtheorem{remark}{Remarks}
\newtheorem{notation}{Notations}
\newcommand{\bigslant}[2]{{\raisebox{.2em}{$#1$}\left/\raisebox{-.2em}{$#2$}\right.}}
\title{About enveloping algebras of direct sums.}
\author[1]{G\'erard H. E. Duchamp\thanks{gheduchamp@gmail.com}}
\author[2]{Jean-Gabriel Luque\thanks{jean-gabriel.luque@univ-rouen.fr}} 
\author[1]{Christophe Tollu\thanks{ct@lipn.univ-paris13.fr}}
\author[3]{Vu Nguyen Dinh\thanks{ndvu@math.ac.vn}}
\author[4]{Various Authors\thanks{toadd}}
\affil[1]{\small{LIPN, Northen Paris University, Sorbonne Paris City, 93430 Villetaneuse, France.}}
\affil[2]{\small{GR$^2$IF,  Université de Rouen, Avenue de l’Université, 76801 Saint-Étienne du Rouvray,\\
Cedex, France.}}
\affil[3]{\small{USTH, University of Science and Technology of Hanoi, Vietnam.}}
\date{\today}
\begin{document}
\maketitle

\begin{abstract}
We solve the PBW-like problem of normal ordering for enveloping algebras of direct sums.  
\end{abstract}
\section{A question about the enveloping algebra of a direct sum.}
This question is imported from \cite{MO-Orig}. It is linked to this one 
\cite{MO-Smash} in the case of semi-direct products. 

Let us consider a Lie $\kk$-algebra ($\kk$ is a commutative ring) written as a (module) direct sum of two of its subalgebras
$$
\mathfrak{g}=\mathfrak{g}_1\oplus\mathfrak{g}_2\ (\oplus=\oplus_{\kk-mod})
$$ 
and the linear maps associated to this direct sum decomposition
\begin{equation}\label{j-and-p}
\begin{tikzcd}[column sep=1.5cm]
\f{g}_i \arrow[r, shift left, "j_i"]
\arrow[r, <-, shift right, swap, "p_i"] & \f{g}
\end{tikzcd}
\mbox{ such that }j_1p_1 + j_2p_2=Id_{\f{g}}
\end{equation}
($Id_{\f{g}}$ is a sum of two orthogonal projectors, remark 
that only $j_i$'s are Lie morphisms in general).\\
We get, at once, the maps $\calU(j_i)$ through the universal algebra functor $\calU$ (see below section \ref{Universal} ‘‘Universal Constructions'') as follows. 
\begin{equation}\label{spp1}
\begin{tikzcd}[column sep=2cm, row sep=1.2cm]
\f{g}_i
\arrow[d,"\sigma_i",swap] \arrow[r,"j_i"]
& \f{g} 
\arrow[d,"\sigma",swap]\\  
\calU(\f{g}_i)\arrow[r,"\calU(j_i)"] 
& \calU(\f{g})
\end{tikzcd}
\end{equation}
allowing us to multiply members of 
$\mathcal{U}(\mathfrak{g}_1)\otimes_{\kk}\mathcal{U}(\mathfrak{g}_2)$ within 
$\mathcal{U}(\mathfrak{g})$ by the composite map 
\begin{equation}\label{map1}
\mu_{state}=\mu\circ (\calU(j_1)\otimes_{\kk} \calU(j_2)):\ \mathcal{U}(\mathfrak{g}_1)\otimes_{\kk}\mathcal{U}(\mathfrak{g}_2)\to\mathcal{U}(\mathfrak{g})
\end{equation}
where $\mu$ is the multiplication of $\mathcal{U}(\mathfrak{g})$.

One can check, using generators, that $\mu_{state}$ is surjective (and, in many usual cases \cite{MO-Orig} bijective). 

\textbf{Question:} Is the property that $\mu_{state}$ is bijective, true in the general case ?  
\begin{remark}\label{Quot-Rem}
\begin{enumerate}
\item Unless explicitly stated, all tensor products will be understood over $\kk$. 
\item All states below are elements of spaces of the form 
$M=\calA_1\otimes \calA_2$ (see Eq. \ref{Domains-arrows1}) where, for $i=1,2$, $(\calA_i,\ast_i)$ is a $\kk$-AAU (i.e. Associative Algebra with Unit). Such a space $M$ is naturally a $\calA_1-\calA_2$ bimodule ($\calA_1$ module by multiplication on the left and $\calA_2$ module by multiplication on the right). To avoid confusion (as tensors may appear inside $\calA_1$), the separating tensor of $M=\calA_1\botimes \calA_2$ will, from time to time, be noted bold and oversized. 
\label{States-bimod}
\end{enumerate}
\end{remark}
\begin{example}\label{Example-Generator}
\begin{enumerate}
\item In the remarks and examples below, for any $\Z$-algebra $\calA$ and 
$q\in \N_{\ge2}$, we will 
note $(\calA)_q$ the quotient $\Z$-algebra $\bigslant{\calA}{q.\calA}$ (in all cases $q.\calA$ is an ideal).\\  
For example, the situation $\f{g}=\f{g}_1\oplus\f{g}_2$ where no factor is an ideal is frequent for Lie algebras admitting a triangular decomposition\footnote{As Kac-Moody algebras \cite{MP}, have also a look at \cite{VK} Ch 1 Exercise 1.8 (local Lie algebras).}%
$\f{g}=\f{n}_+\oplus\f{h}\oplus\f{n}_-$ for example 
$$
\f{sl}(n,\Z)=T_+(n,\Z)\oplus D(n,\Z)\oplus T_-(n,\Z)
$$ 
and one can create an example without any basis with $\kk=\Z,\ q\in \N_{\ge2}$ with 
\begin{eqnarray}
&&\f{g}=\f{sl}(n,\Z)=T_+(n,\Z)\oplus D(n,\Z) \oplus T_-(n,\Z) \mbox{ then }\cr
&&(\f{g})_q=T_+(n,\Z)_q\oplus D(n,\Z)_q\oplus \left(T_-(n,\Z)\right)_q 
\mbox{ and, if one needs two factors, }\cr
&&(\f{g})_q=\left(T_+(n,\Z)\oplus D(n,\Z)\right)_q\oplus \left(T_-(n,\Z)\right)_q
\end{eqnarray}
\end{enumerate}
\end{example}
{\renewcommand{\abstractname}{Acknowledgements}
\begin{abstract}
We thank Darij Grinberg, from Drexel University and Jim Humphreys (through MathOverflow) for fruiful interactions and their interest for this question. 
\end{abstract}
}
\section{Universal Constructions}\label{Universal}
\subsection{General principle.}\label{FreeObj}
In this subsection, we introduce the combinatorial (free) objects that we will use throughout the manuscript and the notion of enveloping algebra of a Lie algebra. These objects (call them $G(X)$) together with a map $j_X:\ X\to G(X)$ are all solutions of universal problems. We will recall the definition, notation and terminology about these free\footnote{Or freely generated in the case of enveloping algebras.} objects below (cf. in general Bourbaki \cite{B-ToS} 
Ch IV \S 3 or \cite{McLane} and, in particular,  
\cite{B-Alg13} Ch I \S 7.1 and Lothaire \cite{LothCOW} Prop 1.1.1 for words and the free monoids, Bourbaki \cite{B-Lie13} Ch II \S 2.2 Prop 1 and Reutenauer \cite{FLA} Thm 0.4 for free Lie algebras and Bourbaki \cite{B-Alg13} for enveloping algebras i.e. towards the free associative algebras with unit and Dinh Vu Nguyen's thesis \cite{NDVU-PhD} for all these matters), but here, we state the general principle.\\
The scheme is the same for all categories considered in the following list ($\kk$ being a fixed ring).
\begin{equation}\label{CatList}
\Mon,\Grp,\AAU{\kk},\Lie{\kk}
\end{equation} 

\smallskip
All objects of these categories can be considered as sets, we then have a natural ‘‘forgetful'' functor $F$ such that, $\calA$ being an object (of one of these categories), 
$F(\calA)$ is the set underlying the structure $\calA$. As well, any 
$\calA\in \AAU{\kk}$ can be considered as a Lie algebra with the bracket 
$[x,y]:=xy-yx$

We are now in the position 
of stating the universal problem leading to left-adjoint of a functor $F$.

\smallskip\noindent%4
\textbf{Universal problem (w.r.t. $F$, naive version\footnote{See the theory of Heteromorphisms\cite{Hets}})}\pointir\\
\textit{For any set $X$ ($\calC$ being one of the categories as above) does there exist a 
pair $(j_X,G(X))$ ($G(X)$ being an object of $\calC$ and $j_X:\ X\to G(X)$ an heteromorphism) such that:\\
For any map $f:\ X\to \calA$ (heteromorphism), there exists a unique 
$\widehat{f}\in Hom_{\calC}(G(X),\calA)$ such that $f=F(\widehat{f})\circ j_X$}.

\smallskip
\begin{remark}\label{FreeFunct} i) It might happen that $G$ be not defined everywhere as shows the case with $\calC=\FinSet$, $F$ being the inclusion functor (i.e. $F(X)=X$ for every finite set and $F(f)=f$ for every set-theoretical map between finite sets).\\ 
However a solution of the universal problem \eqref{adjunction1}, \textit{for all} $X$, provides a \textit{ free functor } $G: \Set \to \calC, X \mapsto G(X) $ which is left-adjoint to the forgetful 
functor $F: \calC \to \Set$. The reader must be aware that, in general, the notion of ‘‘forgetful functor'' (here constructed from algebraic structures and sets) 
is informal.
\begin{equation}\label{adjunction1}
\begin{tikzcd}[column sep=1.6cm, row sep=0.8cm]
\Set\arrow[rr,leftarrow,"F",pos=0.5,densely dotted]
    \arrow[dd,dash,
    start anchor={[xshift=17ex, yshift=4ex]},
    end anchor={[xshift=17ex]}
    ]&
& \calC
\\[-20pt] 
X \arrow[rr,"f",pos=0.3,dashed]
\arrow[rrd,dashed, "j_X",pos=0.41,swap]&      &  
\calA\\
\phantom{x}&& G(X). \arrow[u, "\widehat{f}"']
\end{tikzcd}
\end{equation}
ii) We recall here that the universal enveloping algebra of a Lie $\kk$-algebra 
$\mathfrak{g}$ is a pair $(\sigma,\calU(\mathfrak{g}))$, where  $\calU(\mathfrak{g})$ is an object in $\AAU{\kk}$ and 
$\sigma:\ \mathfrak{g}\to \calU(\mathfrak{g})$ is a morphism in $\Lie{\kk}$,
which is a solution of the following universal problem:
\begin{equation}\label{EnvUnivDiag}
\begin{tikzcd}[column sep=1.6cm, row sep=0.8cm]
\Lie{\kk}\arrow[rr,leftarrow,"F",pos=0.5,densely dotted]
    \arrow[dd,dash,
    start anchor={[xshift=17ex, yshift=4ex]},
    end anchor={[xshift=17ex]}
    ]&
& \AAU{\kk}\\[-20pt] 
\mathfrak{g} \arrow[rr,"f",pos=0.3,dashed]
\arrow[rrd,dashed, "\sigma",pos=0.41,swap]&      &  
\calA\\
\phantom{x}&& \calU(\mathfrak{g}). \arrow[u, "\widehat{f}"']
\end{tikzcd}
\end{equation}
From this arises that there exists the universal enveloping functor 
\begin{eqnarray}
\calU: \Lie{\kk} \rightarrow \AAU{\kk}, && \mathfrak{g} \longmapsto  
(\sigma,\calU(\mathfrak{g}))
\end{eqnarray}
which is a left-adjoint to the Lie-ation functor $F$. 
\end{remark}
\begin{notation}\label{spare-pieces-1}
In the following, we will use notations as above and also 
\begin{enumerate}
\item Identity of $\f{g_i}$ (resp. of $\calU(\f{g}_i)$) will be noted, for short, $I_i$ (resp. $I_{\calU_i}$)
\item  The maps $\sigma_i\ \mathfrak{g}_i \mapsto \calU(\mathfrak{g}_i)$ 
(resp. the map $\sigma:\ \mathfrak{g} \mapsto \calU(\mathfrak{g})$), 
%$\sigma_{UT}$ (resp. $\sigma_{UT}^{(i)}$), the natural map $T(\f{g})\to \calU(\f{g})$ (resp. the natural maps $T(\f{g}_i)\to \calU(\f{g}_i))$). And 
%
\item The maps deduced by universal constructions, as in the preamble, the maps 
$\calU(j_i)$ 
\begin{equation}\label{spp1}
\begin{tikzcd}[column sep=2cm, row sep=1.2cm]
\f{g}_i
\arrow[d,"\sigma_i",swap] \arrow[r,"j_i"]
& \f{g} 
\arrow[d,"\sigma",swap]\\  
\calU(\f{g}_i)\arrow[r,"\calU(j_i)"] 
& \calU(\f{g})
\end{tikzcd}
\end{equation}
and $\psi:\ T(\f{g})\to \calU(\f{g})$ (resp. 
$\psi_i:\ T(\f{g}_i)\to \calU(\f{g}_i)$) the natural (quotient) maps (see Bourbaki \cite{B-Lie13} Ch I \S 2.7 p17).
\item The chaining of domains and maps involved is as follows
\begin{equation}\label{Domains-arrows1}
\begin{tikzcd}[column sep=1.9cm]
T(\f{g}_1)\otimes \calU(\f{g}_2)
\arrow[r,"\psi_1\otimes I_{U_2}"]
    \ar[to path={ -- ([yshift=-6ex]\tikztostart.south) -| (\tikztotarget)},
    rounded corners=12pt]{rrr}
& \calU(\f{g}_1)\otimes \calU(\f{g}_2)
\arrow[r,"\calU(j_1)\otimes\, \calU(j_2)"]
    \ar[to path={ -- ([yshift=-4ex]\tikztostart.south) -| (\tikztotarget)},
    rounded corners=12pt]{rr}
& \calU(\f{g})\otimes \calU(\f{g})
\arrow[r,"\mu"]
& \calU(\f{g})
\end{tikzcd}
\end{equation}
The upper long arrow being 
$\mu_{state}:=\mu_{state}^U=\mu\circ (\calU(j_1)\otimes\, \calU(j_2))$, 
the lower long (longer) arrow being 
$\mu_{state}^T=\mu_{state}\circ (\psi_1\otimes I_{U_2})$. We will now construct the normal form calculator and, from it, deduce a section $s$ of $\mu_{state}^U$ which will turn out to be bijective. Knowing already that $\mu_{state}^U$ is surjective, it will be sufficient to establish that 
$$
s\circ \mu_{state}^U=Id_{\calU(\f{g}_1)\otimes\, \calU(\f{g}_2)}
$$    
\end{enumerate}
\end{notation}
\section{Step-by-Step construction of a normal form calculator}\label{Normal-Form-Calc}
Having remarked that the domain and codomain of $\mu_{state}\ (=\mu_{state}^U)$ are 
$\calU(\f{g}_1)-\calU(\f{g}_2)$ modules ($\calU(\f{g}_1)-$ by multiplication on the left and $-\calU(\f{g}_2)$ by multiplication on the right for the domain and through $\calU(\f{g}_1))\otimes \calU(\f{g}_2))$ for the codomain) our strategy will be to construct a $-\calU(\f{g}_2)$ section of $\mu$ (this linearity will help us 
to make the construction, at first defined on $T(\f{g}_1)\otimes \calU(\f{g}_2)$, pass to quotients). We observe now, in all cases when $\mu_{state}$ is one-to-one, there is an action on the left ($g\ast_{\calU}$) of $\calU(\f{g})$ on the space of states 
$\calU(\f{g}_1)\otimes \calU(\f{g}_2)$ provided by transport of structure \cite{TransStruct} as follows  
\begin{equation}
g\ast_{\calU} (m_1\otimes m_2)=\mu_{state}^{-1}(g.\mu_{state} (m_1\otimes m_2))
\end{equation}
Now, we will construct this action in the general case by passing to quotients a similar action on $T(\f{g}_1)\otimes \calU(\f{g}_2)$ denoted by $g\ast_T-$. This compatibility (to be proved) is illustrated by the following diagram
\begin{equation}\label{sqr1}
\begin{tikzcd}[column sep=2cm, row sep=1.2cm]
T(\f{g}_1)\otimes \calU(\f{g}_2)
\arrow[d,dashed," g\ast_T- ",swap] \arrow[r,"\psi_1\otimes I_2"]
& \calU(\f{g}_1)\otimes \calU(\f{g}_2) 
\arrow[d,dashed," g\ast_{\calU}- ",swap]\\  
T(\f{g}_1)\otimes \calU(\f{g}_2)\arrow[r,"\psi_1 \otimes I_2"] 
& \calU(\f{g}_1)\otimes \calU(\f{g}_2) 
\end{tikzcd}
\end{equation}
where $\psi_1:\ T(\f{g}_1)\to \calU(\f{g}_1)$ is the natural (quotient) map and $I_2=Id_{\calU(\f{g}_2)}$.
We will proceed in fours steps 
\begin{enumerate}
\item Construction
\item Compatibility with $\equiv_{\psi_1}$
\item Action $\ast_{\calU}$ as a Lie action.
\item Section and isomorphism
\end{enumerate}
\subsection{Construction of the actions $g\ast$}\label{actions1}
Let us recall that $\mathfrak{g}$ is a Lie algebra split ($\kk$-module decomposition) as follows 
\begin{equation}\label{G-split}
\mathfrak{g}=\mathfrak{g}_1\oplus\mathfrak{g}_2\ (\mbox{here }
\oplus=\oplus_{\kk-mod})
\end{equation}
Let $j_i,p_i$ be the corresponding embeddings and projectors (see also the end of paragraph \ref{FreeObj}). In addition, we will note $\psi_1$ the morphism of $\kk$-AAU $\psi_1:\ T(\f{g}_1)\to \calU(\f{g}_1)$ obtained by multiplication of factors and $\calU(j_2)$, the natural morphism 
$\calU(j_2):\ \calU(\f{g}_2)\to \calU(\f{g})$ as defined above (see diagram \ref{spp1})
We now have the following 
\begin{theorem}\label{Th-A}
With the notations as above, \\
i) there exists a unique linear map 
\begin{equation}
\Phi:\ \f{g}\otimes T(\f{g}_1)\otimes \calU(\f{g}_2)\to 
T(\f{g}_1)\otimes \calU(\f{g}_2).
\end{equation} 
(in the sequel, $\Phi(g\otimes t\botimes m)$ will be alternatively noted $g\ast_T (t\botimes m)$)\\ 
such that
\begin{eqnarray}\label{rec-PhiT}
\left\{\begin{array}{l}
g\ast_T (1_{T(\f{g}_1)}\otimes m)=p_1(g)\otimes m + 
1_{T(\f{g}_1)}\otimes \sigma_2 p_2(g).m\mbox{ for all } (g,m)\in \f{g}\times \calU(\f{g}_2)
\\[3mm]
g\ast_T(x\otimes t\botimes m)=[g,j_1(x)]\ast_T(t\botimes m) + x\otimes 
\big(g\ast_T(t\botimes m)\big)\\[2mm]
\mbox{ for all } (g,x,t,m)\in \f{g}\times\f{g}_1\times T(\f{g}_1)\times \calU(\f{g}_2)    
\end{array}\right.
\end{eqnarray}
(Nota : For the sake of clarity, we have used the blue tensor product as explained in\\ Remark \ref{Quot-Rem}.\ref{States-bimod}.)\\
ii) This map is filtered in the following sense
\begin{equation}
\Phi\big(\f{g}\otimes T_{\leq n}(\f{g}_1)\otimes \calU(\f{g}_2)\big)\subset 
T_{\leq n+1}(\f{g}_1)\otimes \calU(\f{g}_2)
\end{equation}
iii) It is compatible with\\
a) The $\calU(\f{g}_2)$ right module structure of 
$T(\f{g}_1)\otimes \calU(\f{g}_2)$ as 
\begin{eqnarray}
&&\mbox{ for }(g,t,m)\in \f{g}\times T(\f{g}_1)\times \calU(\f{g}_2)\mbox{ one has }\cr
&& g\ast_T (t\otimes m)=\big(g\ast_T (t\otimes 1_{\calU(\f{g}_2)})\big).m
\end{eqnarray}
b) Multiplication of factors. Let  
\begin{equation}
\mu_{state}^T=\mu\circ (\calU(j_1)\otimes \calU(j_2))\circ(\psi_1\otimes I_2)
\end{equation}
(see Notation \eqref{spare-pieces-1} and Eq. \eqref{map1}) as, for all 
$(g,t,m)\in \f{g}\times T(\f{g}_1)\times \calU(\f{g}_2)$, we have 
\begin{equation}\label{mu-compat}
\mu_{state}^T(g\ast_T(t\otimes m))=\sigma(g).\mu_{state}^T(t\otimes m)=
\sigma(g).\mu_{state}^T(t\otimes 1_{\calU(\f{g}_2}).\,\calU(j_2)(m)
\end{equation}
iv) There is a unique map 
\begin{equation}
\Phi_U:\ \f{g}\otimes \calU(\f{g}_1)\otimes \calU(\f{g}_2)\to 
\calU(\f{g}_1)\otimes \calU(\f{g}_2)
\end{equation} 
such that the following diagram commutes 
\begin{equation}\label{Cell1}
\begin{tikzcd}[column sep=2cm, row sep=0.8cm]
\f{g}\otimes T(\f{g}_1)\otimes\calU(\f{g}_2)\arrow[d,"\Phi",swap]
\arrow[r,"I_1\otimes\psi_1\otimes I_{U_2} "] 
& \f{g}\otimes \calU(\f{g}_1)\otimes \calU(\f{g}_2)\arrow[d," \Phi_U",swap]\\  
T(\f{g}_1)\otimes \calU(\f{g}_2) \arrow[r,"\psi_1\otimes I_{U_2}"] 
& \calU(\f{g}_1)\otimes \calU(\f{g}_2)
\end{tikzcd}
\end{equation} 
\end{theorem}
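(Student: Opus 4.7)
The idea is to establish (i)--(iv) in sequence, each by an induction on the length of the $T(\f{g}_1)$-factor; only (iv) is genuinely delicate.

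For (i), I would build $\Phi$ recursively. Since $T(\f{g}_1)=\bigoplus_n \f{g}_1^{\otimes n}$, the first line of (\ref{rec-PhiT}) defines $\Phi$ on $\f{g}\otimes 1\otimes\calU(\f{g}_2)$ and the second line extends it from degree $n$ to degree $n+1$. Linearity is built in, and uniqueness is immediate since the two lines determine $\Phi$ on a spanning set. Part (ii) is then essentially by inspection: the base lands in $T_{\le1}$, and in the step each summand $[g,j_1(x)]\ast_T(t\otimes m)$ and $x\otimes(g\ast_T(t\otimes m))$ gains degree at most one over the value given by the induction hypothesis.

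For (iii)(a), right $\calU(\f{g}_2)$-linearity is clear at the base and propagates through the recursion because both $[g,j_1(x)]\ast_T-$ and $x\otimes-$ commute with right multiplication on the $\calU(\f{g}_2)$-factor. For (iii)(b), I would induct on $|t|$. The base uses the squares $\calU(j_i)\circ\sigma_i=\sigma\circ j_i$ and collapses via the defining identity $j_1p_1+j_2p_2=\mathrm{id}_{\f{g}}$ to $\sigma(g)\cdot\mu_{state}^T(1\otimes m)$. In the step, apply the induction hypothesis to each summand, then rewrite $\sigma([g,j_1(x)])=\sigma(g)\sigma(j_1(x))-\sigma(j_1(x))\sigma(g)$ (as $\sigma$ is a Lie morphism) and use $\mu_{state}^T(x\otimes t\otimes m)=\sigma(j_1(x))\cdot\mu_{state}^T(t\otimes m)$; the cross-terms telescope to $\sigma(g)\cdot\mu_{state}^T(x\otimes t\otimes m)$.

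The heart of the theorem is (iv), which amounts to showing that $\Phi$ descends through $\psi_1\otimes I_{U_2}$, i.e.\ that
\[
g\ast_T\big(J\otimes\calU(\f{g}_2)\big)\subset J\otimes\calU(\f{g}_2),\qquad J:=\ker\psi_1.
\]
Since $J$ is the two-sided ideal of $T(\f{g}_1)$ generated by the PBW elements $P_{x,y}:=x\otimes y-y\otimes x-[x,y]_{\f{g}_1}$, a first induction on the length of a left factor $\alpha$ reduces the claim to $g\ast_T(P_{x,y}\otimes\beta\otimes m)\in J\otimes\calU(\f{g}_2)$ for all $g,x,y,\beta,m$. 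Expanding each of the three summands of $P_{x,y}\otimes\beta$ by applying the recursion twice, the ``mixed'' terms pair up and cancel, and the remaining coefficient
\[
[[g,j_1(x)],j_1(y)]-[[g,j_1(y)],j_1(x)]-[g,[j_1(x),j_1(y)]]
\]
vanishes by Jacobi in $\f{g}$ together with $j_1([x,y]_{\f{g}_1})=[j_1(x),j_1(y)]$. What is left is exactly $P_{x,y}\otimes(g\ast_T(\beta\otimes m))$, which lies in $J\otimes\calU(\f{g}_2)$ because $J$ is left-absorbing. Once the descent is granted, $\Phi_U$ is determined by the diagram, and uniqueness follows from surjectivity of $\psi_1\otimes I_{U_2}$. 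The main obstacle is the bookkeeping in this last step: the Jacobi cancellation is precisely what makes the tensor-level action compatible with the PBW quotient, and without it the whole construction would fail.
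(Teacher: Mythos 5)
Your plan matches the paper's proof in all essentials: (i)--(ii) by the same degreewise induction building $\Phi$ on $T_{\le n}(\f{g}_1)$, (iii)(b) by the same telescoping computation using $j_1p_1+j_2p_2=\mathrm{Id}_{\f{g}}$ and the Lie-morphism property of $\sigma$, and (iv) by reducing to the generators $\alpha\otimes(x\otimes y-y\otimes x-[x,y])\otimes\beta$ of $\ker\psi_1$, inducting on the left factor, cancelling the mixed terms, and invoking Jacobi for the coefficient $[[g,j_1(x)],j_1(y)]-[[g,j_1(y)],j_1(x)]-[g,[j_1(x),j_1(y)]]$. The only (cosmetic) divergence is that the paper first descends on $T(\f{g}_1)\otimes 1_{\calU(\f{g}_2)}$ via its Lemma on $\ker(s\otimes\mathrm{Id})$ and then extends by the right $\calU(\f{g}_2)$-linearity of (iii)(a), whereas you treat the full kernel directly; both are sound.
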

\begin{proof}
\begin{enumerate}
\item[i)] and ii) We will show, by induction on $n$, the following statement:\\
For all $n\ge0$, there exists a unique linear map 
\begin{equation}
\Phi_n:\ \f{g}\otimes T_{\leq n}(\f{g}_1)\otimes \calU(\f{g}_2)\to 
T_{\leq n+1}(\f{g}_1)\otimes \calU(\f{g}_2)
\end{equation} 
noted $g\ast_T^{(n)} (t\otimes m):=\Phi_n(g\otimes t\otimes m)$
such that, 
\begin{eqnarray}\label{recn}
\left\{\begin{array}{l}
g\ast_T^{(n)}(1_{T(\f{g}_1)}\otimes m)=p_1(g)\otimes m + 
1_{T(\f{g}_1)}\otimes \sigma_2 p_2(g).m\mbox{ for all } (g,m)\in \f{g}\times \calU(\f{g}_2)
\\[3mm]
g\ast_T^{(n)}(x\otimes t\otimes m)=[g,j_1(x)]\ast_T^{(n-1)}(t\otimes m) + x\otimes 
\big(g\ast_T^{(n-1)}(t\otimes m)\big)\\[2mm]
\mbox{ for all } (g,x,t,m)\in \f{g}\times\f{g}_1\times T_{\leq n-1}(\f{g}_1)\times \calU(\f{g}_2)    
\end{array}\right.
\end{eqnarray}
For $n=0$, $\Phi_{0}$ is clearly uniquely defined by 
\begin{equation*}
\Phi_{0}(g\otimes \lambda.1_{T(\f{g}_1)}\otimes m):=
\lambda.\big(p_1(g)\otimes m + 
1_{T(\f{g}_1)}\otimes \sigma_2 p_2(g)m\big).
\end{equation*}
We now suppose $(\Phi_{j})_{0\leq j\leq n}$ to be uniquely defined by  
\eqref{recn} and show the same for some\footnote{which will turn out to be unique.}
\begin{eqnarray}
&&\Phi_{n+1}:\ \f{g}\otimes T_{\leq n+1}(\f{g}_1)\otimes \calU(\f{g}_2)\to 
T_{\leq n+2}(\f{g}_1)\otimes \calU(\f{g}_2) \mbox{ with }\cr
&& g\ast_T^{(n+1)} (t\otimes m):=\Phi_{n+1}(g\otimes t\otimes m)
\end{eqnarray}

Remarking that 
$$
\f{g}\otimes T_{\leq n+1}(\f{g}_1)\otimes \calU(\f{g}_2)=
\f{g}\otimes T_{\leq n}(\f{g}_1)\otimes \calU(\f{g}_2) \oplus
\f{g}\otimes T_{n+1}(\f{g}_1)\otimes \calU(\f{g}_2)
$$
we define $\Phi_{n+1}$ as coinciding with $\Phi_{n}$ on the sector 
$\f{g}\otimes T_{\leq n}(\f{g}_1)\otimes \calU(\f{g}_2)$.\\ 
Now for 
$$
(g,x,t,m)\in \f{g}\times\f{g}_1\times T_{n}(\f{g}_1)\times \calU(\f{g}_2), 
$$
we observe that
$$
(g,x,t,m)\mapsto [g,x]\ast_T (t\otimes m) + x\otimes (g\ast_T (t\otimes m))
$$ 
is $\kk$-quadrilinear which entails existence and unicity of a linear map 
$$
\check{\Phi}_{n+1}:\ \f{g}\otimes \big(\f{g}_1\otimes T_{n}(\f{g}_1)\big)\otimes \calU(\f{g}_2) = 
\f{g}\otimes T_{n+1}(\f{g}_1)\otimes \calU(\f{g}_2) \to 
T_{\leq n+2}(\f{g}_1)\otimes \calU(\f{g}_2) 
$$
This allows us to set $\Phi_{n+1}=\Phi_{n}\oplus \check{\Phi}_{n+1}$ which is uniquely defined due to \eqref{recn}\footnote{In fact, $\Phi$ is the inductive limit of the sequence $\Phi_n$.}.  
\item[iii)] a) Again, by induction.\\
b) Compatibility with $\mu_{state}^T$.\\ 
Again, we prove this by induction on $n$ on the property that, for all 
$(g,t,m)\in \f{g}\times T_n(\f{g}_1)\times \calU(\f{g}_2)$, we have \eqref{mu-compat}.\\ 
For $n=0$, it suffices to remark that 
\begin{eqnarray}
&&\mu_{state}^T\big(g\ast_T(1_{T(\f{g}_1)}\otimes m)\big)=
\mu_{state}^T\big(p_1(g)\otimes m + 
1_{T(\f{g}_1)}\otimes \sigma_2 p_2(g).m\big)=\cr
&& \mu_{state}^T(p_1(g)\otimes m) + 
\mu_{state}^T\big(1_{T(\f{g}_1)}\otimes \sigma_2 p_2(g).m\big)=
\sigma(j_1p_1(g)+j_2p_2(g)).\calU(j_2)(m)=\cr
&&\sigma(g).m=
\sigma(g).\mu_{state}^T(1_{T(\f{g}_1)}\otimes 1_{\calU(\f{g}_2)}).\calU(j_2)(m)
\end{eqnarray}
For $n\geq 1$ we prove \eqref{mu-compat} by induction using linear generators of $T_n(\f{g}_1)$ i.e. the family $(x\otimes t)_{x\in \f{g}_1\times T_{n-1}(\f{g}_1)}$ then 
\small{
\begin{eqnarray}
&&\mu_{state}^T\big(g\ast_T((x\otimes t)\otimes m)\big)=
\mu_{state}^T\big([g,j_1(x)]\ast_T(t\otimes m) \big)+ 
\mu_{state}^T\big(x\otimes (g\ast_T(t\otimes m))\big)=\\[2mm]
&& \sigma([g,j_1(x)]).\calU(j_1)\psi_1(t).\calU(j_2)(m) + \calU(j_1)\sigma_1(x).\sigma(g).\calU(j_1)\psi_1(t).\calU(j_2)(m)=\\[2mm]
&&\sigma([g,j_1(x)]).\calU(j_1)\psi_1(t).\calU(j_2)(m) + \sigma j_1(x).\sigma(g).\calU(j_1)\psi_1(t).\calU(j_2)(m)=\\[2mm]
&&\big(\sigma(g).\sigma j_1(x)-\sigma j_1(x).\sigma(g)\big).\calU(j_1)\psi_1(t).\calU(j_2)(m) +\\[2mm] 
&&\sigma j_1(x).\sigma(g).\calU(j_1)\psi_1(t).\calU(j_2)(m)=\\[2mm]
&&\sigma(g).\sigma j_1(x).\calU(j_1)\psi_1(t).\calU(j_2)(m)=
\sigma(g).\calU(j_1)\psi_1(x).\calU(j_1)\psi_1(t).\calU(j_2)(m)=\\[2mm]
&&\sigma(g).\calU(j_1)\big(\psi_1(x).\psi_1(t)\big).\calU(j_2)(m)=
\sigma(g).\calU(j_1)\psi_1(x\otimes t).\calU(j_2)(m)=\\[2mm]
&&\sigma(g).\mu_{state}^T((x\otimes t)\otimes 1_{\calU(\f{g})}).\calU(j_2)(m)
\end{eqnarray}
}
\item[iv)] We first construct $g\ast_U(t\otimes m)$ for tensors of the type 
$t\otimes 1_{\calU(\f{g}_2)}$ i.e. we construct the restriction of $\Phi_U$ on 
$\f{g}\otimes \calU(\f{g}_1)\otimes 1_{\calU(\f{g}_2)}$ and prove that the following diagram commutes 
\begin{equation}\label{Cell1Reduced}
\begin{tikzcd}[column sep=2cm, row sep=0.8cm]
T(\f{g}_1)\otimes 1_{\calU(\f{g}_2)}\arrow[d,"g\ast_T",swap]
\arrow[r,"\psi_1\otimes I_{U_2} "] 
& \calU(\f{g}_1)\otimes 1_{\calU(\f{g}_2)}\arrow[d,"g\ast_U",swap]\\  
T(\f{g}_1)\otimes \calU(\f{g}_2) \arrow[r,"\psi_1\otimes I_{U_2}"] 
& \calU(\f{g}_1)\otimes \calU(\f{g}_2)
\end{tikzcd}
\end{equation} 
and use the following lemma
\begin{lemma}\label{lemma1}
Let $\calA_i,\ \calB,\ i=1..2$ be $\kk$-AAU and 
$s:\ \calA_1\to \calA_2,\ \epsilon:\calB\to \kk$ be morphisms 
(of $\kk$-AAU). Then\\
i) $\calA_i\to \calA_i\otimes \calB$ defined by $x\mapsto x\otimes 1_{\calB}$ is injective (the image of it will be noted $\calA_i\otimes 1_{\calB}$).\\
ii) The kernel of $s\otimes Id_{\calB}$ is $\ker(s)\otimes 1_{\calB}$.  
\end{lemma} 
\begin{proof}
Left to the reader.
\end{proof}
\textbf{End of the proof of (iv)}\pointir\\
We complete the proof of diagram \eqref{Cell1Reduced}. As $\psi_1$ is 
surjective, so is $\psi_1\otimes I_{U_2}$ (even its retriction i.e. from 
$T(\f{g}_1)\otimes 1_{\calU(\f{g}_2)}$ to 
$\calU(\f{g}_1)\otimes 1_{\calU(\f{g}_2)}$, let us call $\xi_1$ this restriction)
so that the diagram \eqref{Cell1Reduced}, in fact, becomes 
\begin{equation}
\begin{tikzcd}[column sep=2cm, row sep=0.8cm]
T(\f{g}_1)\otimes 1_{\calU(\f{g}_2)}\arrow[d,"g\ast_T",swap]
\arrow[r,"\xi_1"] 
& \calU(\f{g}_1)\otimes 1_{\calU(\f{g}_2)}\arrow[d,dotted,"g\ast_U",swap]\\  
T(\f{g}_1)\otimes \calU(\f{g}_2) \arrow[r,"\psi_1\otimes I_{U_2}"] 
& \calU(\f{g}_1)\otimes \calU(\f{g}_2)
\end{tikzcd}
\end{equation} 
 From Lemma \ref{lemma1}, the kernel of $\sigma_1$ is the module generated, for $(p,x,y)\in T(\f{g}_1)\otimes \f{g}_1\otimes \f{g}_1$ by the family of tensors ($s$ is omitted in the indexation because it will not vary throughout the proof  
$$
E(p,x,y):=(p\otimes x\otimes y\otimes s\otimes 1_{\calU(\f{g}_2)})-
(p\otimes y\otimes x\otimes s\otimes 1_{\calU(\f{g}_2)}) - 
(p\otimes [x,y]\otimes s\otimes 1_{\calU(\f{g}_2)})
$$
then, the existence (and unicity) of $g\ast_U-$ amounts to prove that, for\\ 
$(p,x,y)\in T(\f{g}_1)\otimes \f{g}_1\otimes \f{g}_1$, we have  $g\ast_T(E(p,x,y))=0$. Let us set 
$T(p,x,y):=g\ast_T((p\otimes x\otimes y\otimes s\otimes 1_{\calU(\f{g}_2)})$. We proceed by cases.\\
First case : $p\in T_{n}(\f{g}_1)$ for $n\ge1$.\\
We check the fact for the tensors $p=a\otimes p'$ (sufficient because these tensors generate $T_+(\f{g}_1)=\oplus_{n\ge1} T_{n}(\f{g}_1)$. Let us set 
$T(p,u):=p\otimes u\otimes s\otimes 1_{\calU(\f{g}_2)}$, we have to prove that 
$g\ast_T(T(p,x\otimes y))-g\ast_T(T(p,y\otimes x))\equiv g\ast_T(T(p,[x,y])$\\ 
where $X\equiv Y$ stands for $X-Y\in \ker(\psi_1\otimes I_{U_2})$.\\ 
By direct computation we get 
\begin{eqnarray}
&&g\ast_T(a\otimes p'\otimes x\otimes y\otimes s\otimes 1_{\calU(\f{g}_2)})=\cr
&& [g,a]\ast_T(p'\otimes x\otimes y\otimes s\otimes 1_{\calU(\f{g}_2)}) +
a\otimes g\ast_T(p'\otimes x\otimes y\otimes s\otimes 1_{\calU(\f{g}_2)})
\end{eqnarray} 
from this, we see, by induction, that all amounts to prove the fact for $n=0$. Then,\\   
Second case : $p\in T_{n}(\f{g}_1)$ for $n=0$.\\
By homogeneity, we can suppose $p=1_{T_{n}(\f{g}_1)}$. 
Let us compute 
\begin{eqnarray}
&&g\ast_T(x\otimes y\otimes s\otimes 1_{\calU(\f{g}_2)})=\cr
&& \underbrace{[g,x]\ast_T(y\otimes s\otimes 1_{\calU(\f{g}_2)})}_{T_1(x,y)} 
+\underbrace{x\otimes g\ast_T(y\otimes s\otimes 1_{\calU(\f{g}_2)})}_{T_2(x,y)}=\cr 
&& \underbrace{[[g,x],y]\ast_T(s\otimes 1_{\calU(\f{g}_2)})+
y\otimes \big([g,x]\ast_T(s\otimes 1_{\calU(\f{g}_2)})\big)
}_{T_1(x,y)=T_{11}(x,y)+T_{12}(x,y)}\cr
&&+\underbrace{
x\otimes \big([g,y]\ast_T(s\otimes 1_{\calU(\f{g}_2)})\big)+
x\otimes y\otimes \big(g\ast_T(s\otimes 1_{\calU(\f{g}_2)})\big)
}_{T_2(x,y)=T_{21}(x,y)+T_{22}(x,y)}
\end{eqnarray} 
Then 
\begin{eqnarray}
&& T_{11}(x,y) - T_{11}(y,x)=[g,[x,y]]\ast_T(s\otimes 1_{\calU(\f{g}_2)})\cr
&& T_{12}(x,y) - T_{21}(y,x)=y\otimes \big([g,x]\ast_T(s\otimes 1_{\calU(\f{g}_2))}\big) - 
y\otimes \big([g,x]\ast_T(s\otimes 1_{\calU(\f{g}_2)})\big)=0\cr
&& T_{21}(x,y) - T_{12}(y,x)=x\otimes \big([g,y]\ast_T(s\otimes 1_{\calU(\f{g}_2)})\big) - x\otimes \big([g,y]\ast_T(s\otimes 1_{\calU(\f{g}_2)})\big)=0\cr
&& T_{22}(x,y) - T_{22}(y,x)=
x\otimes y\otimes \big(g\ast_T(s\otimes 1_{\calU(\f{g}_2)})\big)
-y\otimes x\otimes \big(g\ast_T(s\otimes 1_{\calU(\f{g}_2)})\big)\equiv \cr
&& [x,y]\otimes \big(g\ast_T(s\otimes 1_{\calU(\f{g}_2)})\big)
\end{eqnarray}
(we recall that $X\equiv Y$ stands for $X-Y\in \ker(\psi_1\otimes I_{U_2})$). 
Then
\begin{eqnarray}
&&g\ast_T(x\otimes y\otimes s\otimes 1_{\calU(\f{g}_2)})-
g\ast_T(y\otimes x\otimes s\otimes 1_{\calU(\f{g}_2)})\equiv \cr
&& [g,[x,y]]\ast_T(s\otimes 1_{\calU(\f{g}_2)})+ 
[x,y]\otimes \big(g\ast_T(s\otimes 1_{\calU(\f{g}_2)})\big)=\cr
&& g\ast_T([x,y]\otimes s\otimes 1_{\calU(\f{g}_2)})
\end{eqnarray}
then, there exists $g\ast_U$ such that \eqref{Cell1Reduced} commutes.\\
\textbf{End of the proof of \eqref{Cell1}}\pointir\\
We set, for $(g,t,m)\in \f{g}\times \calU(\f{g}_1)\times \calU(\f{g}_2)$,  
\begin{equation}
g\ast_U(t\otimes m):=g\ast_U(t\otimes 1_{\calU(\f{g}_2)}).m
\end{equation}
Now, we remark that $(g,t,m)\mapsto g\ast_U(t\otimes m)$ is trilinear and this completes the proof.
\end{enumerate}
\end{proof}
\begin{corollary}\label{mu_U-state-compat} 
For all $(g,m_1,m_2)\in \f{g}\times \calU(\f{g}_1)\times \calU(\f{g}_2)$, one has 
\begin{equation}\label{mu_U-state-compat-eq} 
\mu_{state}^U(g\ast_U(m_1\otimes m_2))=\sigma(g).\,\calU(j_1)(m_1).\,\calU(j_2)(m_2)
\end{equation}
\end{corollary}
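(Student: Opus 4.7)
The plan is to reduce the $\calU$-level identity to the already-proved $T$-level identity \eqref{mu-compat} by lifting $m_1$ along $\psi_1$, and then chase the resulting square \eqref{Cell1}.

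First I would fix $(g,m_1,m_2)\in\f{g}\times\calU(\f{g}_1)\times\calU(\f{g}_2)$ and use the surjectivity of $\psi_1:T(\f{g}_1)\to\calU(\f{g}_1)$ (which, as recalled in the proof of Theorem \ref{Th-A}(iv), comes from the very definition of $\calU(\f{g}_1)$) to pick a lift $t\in T(\f{g}_1)$ with $\psi_1(t)=m_1$. The commutativity of diagram \eqref{Cell1} then yields
\begin{equation*}
g\ast_U(m_1\otimes m_2)=g\ast_U(\psi_1(t)\otimes m_2)=(\psi_1\otimes I_{U_2})\bigl(g\ast_T(t\otimes m_2)\bigr).
\end{equation*}

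Next I would apply $\mu_{state}^U$ to both sides and use the factorization $\mu_{state}^T=\mu_{state}^U\circ(\psi_1\otimes I_{U_2})$ built into Notation \eqref{spare-pieces-1} and Eq. \eqref{Domains-arrows1}. This turns the left-hand side of the claimed identity into $\mu_{state}^T\bigl(g\ast_T(t\otimes m_2)\bigr)$, so that Theorem \ref{Th-A}(iii)(b), i.e. \eqref{mu-compat}, applies directly and gives
\begin{equation*}
\mu_{state}^U\bigl(g\ast_U(m_1\otimes m_2)\bigr)=\sigma(g).\mu_{state}^T(t\otimes 1_{\calU(\f{g}_2)}).\calU(j_2)(m_2).
\end{equation*}

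Finally I would simplify the middle factor: by the very definition of $\mu_{state}^T$,
\begin{equation*}
\mu_{state}^T(t\otimes 1_{\calU(\f{g}_2)})=\mu\bigl(\calU(j_1)\psi_1(t)\otimes\calU(j_2)(1_{\calU(\f{g}_2)})\bigr)=\calU(j_1)(m_1),
\end{equation*}
which yields exactly \eqref{mu_U-state-compat-eq}. No genuine obstacle is expected; the only point to keep in mind is the compatibility with the right $\calU(\f{g}_2)$-module structure (built into the last line of Theorem \ref{Th-A}(iv)), which guarantees that the identity extends from the special case $m_2=1_{\calU(\f{g}_2)}$ to arbitrary $m_2$, and, symmetrically, that the lift $t$ can be chosen independently of $m_2$ without affecting the final formula.
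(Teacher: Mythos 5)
Your proof is correct and follows exactly the route the paper intends: its own proof consists of the single line ``From Theorem \ref{Th-A} (iii.b, in particular \eqref{mu-compat}) and diagram \eqref{Cell1}'', and you have simply written out that diagram chase in full. Nothing to add.
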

\begin{proof}
From theorem \eqref{Th-A} (iii.b, in particular \eqref{mu-compat}) and diagram \eqref{Cell1}. 
\end{proof}

\subsection{$g\ast_U$ is a $\f{g}$-action on 
$\calU(\f{g}_1)\otimes \calU(\f{g}_2)$.}\label{Lie action}
We here prove that $g\ast_U$ defines a Lie $\f{g}$-action on $\calU(\f{g}_1)\otimes \calU(\f{g}_2)$ i.e. for all $(g,h,m_1,m_2)\in \f{g}_1\times \f{g}_1\times \calU(\f{g}_1)\times \calU(\f{g}_2)$ we have (below $\ast$ will stand 
for $\ast_U$)
\begin{equation}\label{eq1}
g\ast\big(h\ast (m_1\otimes m_2)\big)-
h\ast\big(g\ast (m_1\otimes m_2)\big)=[g,h]\ast (m_1\otimes m_2)
\end{equation}
Let us then set $T(g,h)=g\ast\big(h\ast (m_1\otimes m_2)\big)$. \\
We have 4 cases (which can be reduced to 3 by antisymmetry)\\
a) $(g,h)\in \f{g}_1\times \f{g}_1$\\
\begin{eqnarray}
&& T(g,h)-T(h,g)=g\ast(h\ast (m_1\otimes m_2))-
h\ast(g\ast (m_1\otimes m_2))=\cr
&& \sigma_1(g).\sigma_1(h).m_1\otimes m_2 - \sigma_1(h).\sigma_1(g).m_1\otimes m_2\equiv\cr
&&\sigma_1([g,h]).m_1\otimes m_2=
[g,h]\ast (m_1\otimes m_2)
\end{eqnarray}
b) $(g,h)\in \f{g}_2\times \f{g}_1$\\
\begin{eqnarray}
&& T(g,h)-T(h,g)=g\ast(h\ast (m_1\otimes m_2))-h\ast(g\ast (m_1\otimes m_2))=\cr
&& g\ast(\sigma_1(h).m_1\otimes m_2)-\sigma_1(h).(g\ast (m_1\otimes m_2))=\cr
&& [g,j_1(h)]\ast (m_1\otimes m_2)+\sigma_1(h).(g\ast (m_1\otimes m_2))-
\sigma_1(h).(g\ast (m_1\otimes m_2))=\cr
&& [g,j_1(h)]\ast (m_1\otimes m_2)
\end{eqnarray}
c) $(g,h)\in \f{g}_1\times \f{g}_2$\\
Is true by antisymmetry.\\
d) $(g,h)\in \f{g}_2\times \f{g}_2$\\
For the computation of $T(g,h)-T(h,g)$, we have two cases.\\
d1) $m_1=1_{\calU(\f{g}_1)}$\\
\begin{eqnarray}
&& T(g,h)-T(h,g)=g\ast(h\ast (1_{\calU(\f{g}_1)}\otimes m_2))-h\ast(g\ast (1_{\calU(\f{g}_1)}\otimes m_2))=\cr
&& 1_{\calU(\f{g}_1)}\otimes \sigma_2(g).\sigma_2(h).m_2-1_{\calU(\f{g}_1)}\otimes \sigma_2(h).\sigma_2(g).m_2=
1_{\calU(\f{g}_1)}\otimes \sigma_2([g,h]).m_2=
[g,h]\ast (1_{\calU(\f{g}_1)}\otimes m_2)
\end{eqnarray} 
d2) $m_1\in \calU_+(\f{g}_1)$\\
We prove \eqref{eq1} by induction. Let $m_1\in \calU_n(\f{g}_1)$.\\
We have $n\ge1$ and $\calU_n(\f{g}_1)$ is generated by the products 
$x.m$ with $x\in \calU(\f{g}_1)$ and $m\in \calU_{n-1}(\f{g}_1)$ 
\begin{eqnarray}%
&& T(g,h)-T(h,g)=g\ast(h\ast (x.m\otimes m_2))-h\ast(g\ast (x.m\otimes m_2))=\cr
&& g\ast([h,x]\ast (m\otimes m_2))+g\ast(\sigma_1(x).(h\ast (m\otimes m_2)))\cr
&& - h\ast([g,x]\ast (m\otimes m_2))-h\ast(\sigma_1(x).(g\ast (m\otimes m_2)))\cr
&& = \underbrace{g\ast([h,x]\ast (m\otimes m_2))}_{T_1(g,h)} + 
\underbrace{[g,x]\ast(h\ast (m\otimes m_2)))}_{T_2(g,h)} +
\underbrace{\sigma_1(x).(g\ast (h\ast (m\otimes m_2)))}_{T_3(g,h)}\cr
&& - \underbrace{h\ast([g,x]\ast (m\otimes m_2))}_{T_1(h,g)}
 - \underbrace{[h,x]\ast(g\ast (m\otimes m_2)))}_{T_2(h,g)} 
 - \underbrace{\sigma_1(x).(h\ast (g\ast (m\otimes m_2)))}_{T_3(h,g)}
\end{eqnarray}
Then
\begin{eqnarray}
&& T_1(g,h)-T_2(h,g)=[g,[h,x]]\ast (m\otimes m_2)\mbox{ by induction}\cr
&& T_2(g,h)-T_1(h,g)=[[g,x],h]\ast (m\otimes m_2)\mbox{ by induction}\cr
&& T_3(g,h)-T_3(h,g)=\sigma_1(x).([g,h]\ast (m\otimes m_2))\mbox{ by induction}\cr
&& \mbox{ Hence }T(g,h)-T(h,g)=([g,[h,x]]+[[g,x],h])\ast (m\otimes m_2) + 
x.([g,h]\ast (m\otimes m_2))\cr
&& = [[g,h],x]\ast (m\otimes m_2) + \sigma_1(x).([g,h]\ast (m\otimes m_2)) \cr
&& = [g,h]\ast (x.m\otimes m_2)
\end{eqnarray}
We now come to the proof that $\mu_{state}^U$ is one-to-one. 
\section{The linear map $\mu_{state}^U$ is bijective.}
\begin{theorem}
i) From the (Lie) action $\ast_U$, one deduces a unique $\calU(\f{g})-$ module structure on $\calU(\f{g}_1)\otimes \calU(\f{g}_2)$ (noted $\ast_{mod}$) such that 
$\sigma(g)\ast_{mod}(m_1\otimes m_2)=g\ast_{\calU}(m_1\otimes m_2)$.\\
ii) The map $s:\ m\mapsto m\ast_{mod}(1_{\calU(\f{g}_1)}\otimes 1_{\calU(\f{g}_2)})$ and $\mu_{state}^U$ are mutually inverse.
\end{theorem}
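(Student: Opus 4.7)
To establish (i), the plan is to apply the universal property of $\calU(\f{g})$ (diagram \eqref{EnvUnivDiag}) to the map $\rho\colon \f{g} \to \mathrm{End}_\kk\bigl(\calU(\f{g}_1)\otimes \calU(\f{g}_2)\bigr)$, $g \mapsto g \ast_U (-)$, where the endomorphism space carries the commutator bracket. Subsection \ref{Lie action} showed that $\rho$ is a morphism of Lie algebras, so it lifts uniquely to an AAU-morphism $\hat{\rho}\colon\calU(\f{g}) \to \mathrm{End}_\kk\bigl(\calU(\f{g}_1)\otimes \calU(\f{g}_2)\bigr)$; the formula $u\ast_{mod} x := \hat{\rho}(u)(x)$ is then the unique $\calU(\f{g})$-module structure satisfying $\sigma(g)\ast_{mod} x = g\ast_U x$, both existence and uniqueness coming from the universal property (the required compatibility on the algebra generators $\sigma(\f{g})$ determines $\hat\rho$).

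For (ii), since $\mu_{state}^U$ is already known to be surjective (by the argument on generators recalled in the preamble), it suffices to prove $s \circ \mu_{state}^U = \mathrm{Id}_{\calU(\f{g}_1)\otimes \calU(\f{g}_2)}$: this yields injectivity of $\mu_{state}^U$, hence bijectivity, after which $s = (\mu_{state}^U)^{-1}$ follows automatically. Unwinding the definitions, $s(\mu_{state}^U(m_1\otimes m_2)) = \bigl[\calU(j_1)(m_1)\cdot\calU(j_2)(m_2)\bigr]\ast_{mod}(1\otimes 1) = \calU(j_1)(m_1)\ast_{mod}\bigl[\calU(j_2)(m_2)\ast_{mod}(1\otimes 1)\bigr]$ by the module property, so the goal splits into two \emph{straightening} identities: (A) $\calU(j_2)(m_2)\ast_{mod}(1\otimes 1) = 1\otimes m_2$ for all $m_2 \in \calU(\f{g}_2)$, and (B) $\calU(j_1)(m_1)\ast_{mod}(1\otimes m_2) = m_1\otimes m_2$ for all $m_1,m_2$.

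Both (A) and (B) I would prove by induction on the AAU-degree in $\calU(\f{g}_i)$, using that $\sigma_i(\f{g}_i)$ generates $\calU(\f{g}_i)$; the base case $m_i=1$ is trivial. For the inductive step in (A), write $m_2=\sigma_2(y)m_2'$ with $y\in\f{g}_2$; the module property and the IH reduce the LHS to $j_2(y)\ast_U(1\otimes m_2')$, and the base formula of Theorem \ref{Th-A}(i) together with the orthogonality $p_1 j_2=0$, $p_2 j_2=\mathrm{Id}_{\f{g}_2}$ evaluates this to $1\otimes\sigma_2(y)m_2' = 1\otimes m_2$. For (B), the analogous reduction needs an auxiliary lemma $j_1(x)\ast_U(a\otimes 1)=\sigma_1(x)\cdot a\otimes 1$ for $x\in\f{g}_1$, $a\in\calU(\f{g}_1)$, itself established by a nested induction on the degree of $a$ that invokes the recursive clause of Theorem \ref{Th-A}(i), the Lie-morphism property $[j_1(x),j_1(y)]=j_1([x,y])$, and the defining enveloping relation $\sigma_1(x)\sigma_1(y)-\sigma_1(y)\sigma_1(x)=\sigma_1([x,y])$ in $\calU(\f{g}_1)$. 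This nested induction is where the commutator structure of the Lie action must be reconciled with the associative product, and is the step I expect to be the main obstacle; once it is in hand, (B) follows by right-$\calU(\f{g}_2)$-linearity of $\ast_U$ (Theorem \ref{Th-A}(iii.a), transported to $\ast_U$ via diagram \eqref{Cell1}), which closes the proof.
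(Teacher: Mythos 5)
Your proposal is correct and follows essentially the same route as the paper: part (i) is the identical appeal to the universal property of $\calU(\f{g})$ applied to the Lie morphism $g\mapsto g\ast_U(-)$, and part (ii) is the same reduction to $s\circ\mu_{state}^U=Id$ verified by induction on product-of-generators elements, peeling off $\sigma_1(g_i)$ factors via $\calU(\f{g})$-linearity of $s$ and handling the $\calU(\f{g}_2)$ part through the base formula $j_2(h)\ast_U(1\otimes m)=1\otimes\sigma_2(h)m$. Your only departure is cosmetic: you isolate the straightening identity $j_1(x)\ast_U(a\otimes 1)=\sigma_1(x)a\otimes 1$ as an explicit auxiliary lemma, whereas the paper absorbs it into the construction of $\ast_T$ and the case analysis of Subsection \ref{Lie action}.
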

\begin{proof}
i) From Theorem \eqref{Th-A} (iv), let us note (as above) $g\ast_{\calU}-$, the map 
$m_1\otimes m_2\mapsto g\ast_{\calU}(m_1\otimes m_2)$, we then get a linear map 
$\varphi:\ \f{g}\to End\left(\calU(\f{g}_1)\otimes \calU(\f{g}_2)\right)$ and, by 
\eqref{eq1}, we learn that $\varphi$ is a morphism of $\kk$-Lie algebras. 
By universal property of $\calU(\f{g})$, we get 
\begin{equation}\label{UnivDiag}
\begin{tikzcd}[column sep=1.6cm, row sep=0.8cm]
\Lie{\kk}\arrow[rr,leftarrow,"F",pos=0.4,densely dotted]
    \arrow[dd,dash,
    start anchor={[xshift=17ex, yshift=4ex]},
    end anchor={[xshift=17ex]}
    ]&
& \AAU{\kk}\\[-20pt] 
\f{g} \arrow[rr,"\varphi",pos=0.3,dashed]
\arrow[rrd,dashed, "\sigma",pos=0.41,swap]&      &  
End\left(\calU(\f{g}_1)\otimes \calU(\f{g}_2)\right)\\
\phantom{x}&& \calU(\mathfrak{g}). \arrow[u, "\widehat{\varphi}"']
\end{tikzcd}
\end{equation} 
which means that, for all $(g,m_1,m_2)\in \f{g}\times\calU(\f{g}_1)\times \calU(\f{g}_2)$, 
\begin{equation}
\varphi(g)[m_1\otimes m_2]=\widehat{\varphi}(\sigma(g))[m_1\otimes m_2]
\end{equation}
Of course, such a morphism as $\widehat{\varphi}$ defines at once a structure of left $\calU(\mathfrak{g})$-module on $\calU(\f{g}_1)\otimes \calU(\f{g}_2)$. Its action will be noted $\ast_{mod}$ such that
$$
\sigma(g)\ast_{mod}(m_1\otimes m_2):=\widehat{\varphi}(\sigma(g))[m_1\otimes m_2]
$$
which completes the first point.\\
ii) Knowing already that $\mu_{state}^U$ is surjective, it will be sufficient to establish that 
$$
s\circ \mu_{state}^U=Id_{\calU(\f{g}_1)\otimes\, \calU(\f{g}_2)}
$$    
which amounts to show that for $(g_i)_{1\leq i\leq p}$ in $\f{g}_1$ (resp.
$(h_i)_{1\leq i\leq q}$ in $\f{g}_2$) 
\begin{equation}\label{s-mu}
s\circ \mu_{state}^U\big(
\sigma_1(g_1)\cdots \sigma_1(g_p) \otimes \sigma_2(h_1)\cdots \sigma_2(h_q) \big)=
\sigma_1(g_1)\cdots \sigma_1(g_p) \otimes \sigma_2(h_1)\cdots \sigma_2(h_q) 
\end{equation}
By linearity, this will prove that 
$s\circ \mu_{state}^U=Id_{\calU(\f{g}_1)\otimes\, \calU(\f{g}_2)}$.\\ 
From \eqref{mu_U-state-compat-eq}, for $p>0$, we get, 
\begin{eqnarray}
&&\mu_{state}^U\big(
\sigma_1(g_1)\cdots \sigma_1(g_p) \otimes \sigma_2(h_1)\cdots \sigma_2(h_q) \big)=\cr
&&\calU(j_1)\sigma_1(g_1)\mu_{state}^U\big(
\sigma_2(g_2)\cdots \sigma_1(g_p) \otimes \sigma_2(h_1)\cdots \sigma_2(h_q) \big)
\end{eqnarray}
and, remarking that $s$ is $\calU(\f{g})-$ linear we have 
\begin{eqnarray}
&& s\circ \mu_{state}^U\big(
\sigma_1(g_1)\cdots \sigma_1(g_p) \otimes \sigma_2(h_1)\cdots \sigma_2(h_q) \big)=\cr
&& s\Big(
\calU(j_1)\sigma_1(g_1)\mu_{state}^U\big(
\sigma_2(g_2)\cdots \sigma_1(g_p) \otimes \sigma_2(h_1)\cdots \sigma_2(h_q) \big)\Big)=\cr
&& s\Big(
\sigma j_1(g_1)\mu_{state}^U\big(
\sigma_2(g_2)\cdots \sigma_1(g_p) \otimes \sigma_2(h_1)\cdots \sigma_2(h_q) \big)\Big)=\cr
&& j_1(g_1).s\mu_{state}^U\big(
\sigma_2(g_2)\cdots \sigma_1(g_p) \otimes \sigma_2(h_1)\cdots \sigma_2(h_q) \big)=\cr
&& \sigma_1(g_1).\cdots .\sigma_1(g_p)\otimes \sigma_2(h_1).\cdots .\sigma_2(h_q) 
\end{eqnarray}
the other case ($p=0$) is straightforward. Then, by induction on $p$, one has 
$s\circ \mu_{state}^U(m_1\otimes m_2)=m_1\otimes m_2$ which proves the claim.

\smallskip
QED
\end{proof}

%\tcr{End there}
%
\section{Conclusion and future}
\begin{enumerate}
\item Quantized enveloping algebras
\item Lie superalgebras
\end{enumerate}
\newpage
\end{document}